\documentclass{cccg26}
\usepackage{graphicx,amssymb,amsmath,enumitem}
 \usepackage{url}
\usepackage{hyperref}

\usepackage{xcolor}
\usepackage{tikz}
\usetikzlibrary{arrows.meta, positioning, calc}
\definecolor{red_point}{RGB}{231,10,50} 

\definecolor{green_point}{RGB}{0,185,00} 

\definecolor{blue_point}{RGB}{141,160,243}
\definecolor{purple_point}{RGB}{230,141,255}
\definecolor{X_color}{RGB}{70,70,70} 
\definecolor{red_edge}{RGB}{255,0,50}

\definecolor{green_edge}{RGB}{0,185,00}

\definecolor{blue_edge}{RGB}{141,160,243}
\definecolor{purple_edge}{RGB}{230,141,255}

\definecolor{dummycolor}{RGB}{180, 120, 220}
\definecolor{dummyedgecolor}{RGB}{150, 80, 200}
\definecolor{dummybiedgecolor}{RGB}{180, 180, 180}
\definecolor{stcolor}{RGB}{240, 225, 60}
\definecolor{stcolor}{RGB}{137, 196, 244}
\definecolor{stcolor}{RGB}{235, 140, 130}
\definecolor{stcolor}{RGB}{255, 235, 100}

\definecolor{realdirected}{RGB}{255, 0, 0}

\def\nodescale{0.5}

\tikzset{
  mynode/.style={
    circle,
    draw=black, 
    ultra thin,
    fill=black!10, 
    scale = \nodescale,
    behind path
  }
}
\usepackage{float}
\title{NP-Hardness of Non-Crossing Hamiltonian Path and Cycle\\ in Non-Planar Graphs}

\author{Randal Tuggle\thanks{Dept Comp Sci, UNC Chapel Hill, \texttt{rtuggle@cs.unc.edu}}
	\and
	Jack Snoeyink\thanks{Depts Comp Sci \& SDSS, UNC CH, \texttt{snoeyink@cs.unc.edu}}}

\index{Tuggle, Randal}
\index{Snoeyink, Jack}

\pgfdeclarelayer{background}
\pgfsetlayers{background,main}

\begin{document}
\thispagestyle{empty}
\maketitle

\begin{abstract}
We seek to disentangle the hardness of finding a Hamiltonian path or cycle from the hardness of finding a non-crossing path or cycle by giving a direct reduction from 3-SAT to the non-crossing Hamiltonian path and cycle problems on non-planar graphs. Prior hardness proofs proceed by reduction to planar graphs, where every path is automatically non-crossing; this conflates the two sources of difficulty and leaves unclear why forbidding crossings on the path alone makes the problem hard. Our reduction places the difficulty squarely in the non-crossing constraint, avoids planar gadget constructions, and yields a more transparent proof that may be easier to extend to related problems.
\end{abstract}

\kern -.5em
\section{Introduction}

\kern -.5em
The Hamiltonian path problem is NP-complete for general graphs, although many restricted variants have been shown to be computationally tractable. Introducing geometric constraints, such as requiring the path to be non-crossing, can significantly alter the complexity of the problem; a substantial body of work studies non-crossing paths and related structures under geometric constraints~\cite{ABELLANAS1999141, bandy, eppstein2023non, JAPAN, longAlt, soukup2024bicoloredpointsetsadmitting, tuggle2025enumeration}. One notable example is the Hamiltonian path problem in  a complete bipartite graph $K_{m,n}$: Allowing crossings, a path exists iff $|m-n| \leq 1$, but the non-crossing Hamiltonian path problem in an embedded $K_{m,n}$ remains open, perhaps because existing techniques to establish hardness do not rely solely on the path or cycle's geometric constraints.

Several proofs establish that finding a Hamiltonian path or cycle in a planar graph is NP-hard~\cite{garey1979computers, garey1974simplified, garey1976planar, plesnick1979}. Since all paths and cycles in a planar embedding are inherently non-crossing, the NP-hardness of finding a non-crossing Hamiltonian path or cycle in an arbitrary graph follows as a corollary. However, in each case the hardness is obscured by the gadgets required to enforce planarity of the underlying graph. The complexity is hidden in crossover and planarity-forcing constructions rather than arising transparently from the non-crossing constraint on the path/cycle itself, leaving few techniques for proving hardness in non-planar graphs.

To help gain a deeper understanding of how non-crossing constraints on the path/cycle alone affect the complexity of the Hamiltonian path/cycle problem, we provide a straightforward NP-hardness proof of the non-crossing Hamiltonian path problem in embedded directed graphs, without relying on the planarity of the underlying graph.

\kern -.5em
\section{NP-Completeness of NCHP}\label{sec:path_reduction}

\kern -.5em

In this section, we prove \textsc{3-SAT} $\leq_p$ \textsc{NCHP}, where \textsc{NCHP} denotes the problem of finding a non-crossing Hamiltonian path in a graph embedding. Our reduction avoids the need for a crossover gadget, which for many problems can be difficult to construct.

 Our reduction is inspired by the standard mapping reduction from \textsc{3-SAT} to the Hamiltonian path problem. Our approach modifies the placement of vertices and includes an alternative clause gadget to accommodate the non-crossing constraint. We define the reduction as follows: given a  \textsc{3-SAT} instance $\phi$ with $n$ variables $x_1, x_2, \ldots, x_n$ and $m$ clauses $c_1, c_2, \ldots, c_m$, the mapping $f_{3SAT \to NCHP}$ constructs a corresponding embedded graph~$G_\phi$.  Assume $\phi$ contains an even number of clauses by duplicating a clause if necessary.  This makes the construction more symmetric. 
 
\kern -.5em
\subsection{Graph Construction}
\kern -.25em
We represent each variable $x_i$ as an undirected ring of $3(m+1)$ vertices $v_{(i,0)}, v_{(i,1)}, \ldots, v_{(i,3m+2)}$. That is, for every $0 \leq j < 3m+2$, there are edges from $v_{(i,j)}$ to $v_{(i,j+1)}$ and from $v_{(i,j+1)}$ to $v_{(i,j)}$. The rings are arranged concentrically, with the outermost ring for $x_1$ and innermost for $x_n$.  Clockwise traversal of a ring (beginning at $v_{(i,0)}$) signifies setting $x_i$ to true;  counterclockwise traversal (beginning at $v_{(i,3m+2)}$) signifies setting $x_i$ to false. 

As Figure~\ref{fig:ring_literal} illustrates, directed edges from a start vertex $s$ connect to both ends of the $x_1$ ring and, for all $1 \leq i < n$, each end of ring $x_i$ connects to both ends of ring $x_{i+1}$.  Both ends of the $x_n$ ring  connect to a terminal vertex $t$, ensuring that every Hamiltonian path begins at $s$ and ends at $t$. All figures will use this convention:

\newtheorem{convention}{Convention}
\begin{convention}\label{conv:arrows} In the graph $G_\phi$, solid black lines are bidirectional edges;  red arrows are unidirectional edges. \end{convention}

\begin{figure}[H]
\hspace{.4cm}\begin{minipage}[c]{0.8\textwidth}
\begin{tikzpicture}[scale=1.65]

\def\armreach{2}
\def\ringstep{.2}
\def\arrowendx{2.25}   
\def\arrowendy{-1}  
\def\bottomsepA{.17}   
\def\bottomsepB{.17}   
\def\bottomdotgap{.27} 
\def\endpointx{.7}
\def\endpointsep{.3}
\def\armsepA{.27}
\def\armsepB{.27}
\def\dotgap{.8}
\def\startgap{0.35}
\def\endgap{0.35}
\def\arclabelx{1.35}
\def\arcstartxparam{1.5}
\def\arcshifty{-0.5}
\def\endpointshift{0.5}   

\pgfmathsetmacro{\armtopy}{-4*\endpointsep - 1}
\pgfmathsetmacro{\starty}{\startgap - \endpointsep - \endpointshift}
\pgfmathsetmacro{\endy}{-4*\endpointsep - \endgap - \endpointshift}
\pgfmathsetmacro{\dotsy}{(-2.3)*\endpointsep - \endpointshift}

\node [mynode,fill=stcolor] (start) at (0,{\starty}) {\scalebox{2}{$s$}};
\node [mynode,fill=stcolor] (end) at (0,{\endy}) {\scalebox{2}{$t$}};
\node [mynode,fill=white,draw=none] at (0,{\dotsy}) {\scalebox{1.25}{$\vdots$}};

\pgfmathsetmacro{\arclabely}{\arcshifty + 0.3}
\draw[thick,->,>=stealth] (\arcstartxparam,{\arcshifty}) -- (\arrowendx,{\arcshifty + \arrowendy});
\draw[thick,->,>=stealth] (-\arcstartxparam,{\arcshifty}) -- (-\arrowendx,{\arcshifty + \arrowendy});
\node at (\arclabelx,{\arclabely}) { True};
\node at (-\arclabelx,{\arclabely}) { False};

\foreach \s in {1,2,3,4} {
    \pgfmathsetmacro{\xval}{\armreach - \ringstep*\s}
    \pgfmathsetmacro{\yA}{\armtopy}
    \pgfmathsetmacro{\yB}{\armtopy - \armsepA}
    \node [mynode,fill=white] (v\s1) at ({\xval},{\yA}) {$ $};
    \node [mynode,fill=white] (v\s2) at ({\xval},{\yB}) {$ $};
}

\foreach \s in {1,2,3,4} {
    \pgfmathsetmacro{\xval}{\armreach - \ringstep*\s}
    \pgfmathsetmacro{\yC}{\armtopy - \armsepA - \dotgap}
    \pgfmathsetmacro{\yD}{\armtopy - \armsepA - \dotgap - \armsepB}
    \node [mynode,fill=white] (v\s3) at ({\xval},{\yC}) {$ $};
    \node [mynode,fill=white] (v\s4) at ({\xval},{\yD}) {$ $};
}

\foreach \s in {1,2,3,4} {
    \pgfmathsetmacro{\xval}{-\armreach + \ringstep*\s}
    \pgfmathsetmacro{\yA}{\armtopy}
    \pgfmathsetmacro{\yB}{\armtopy - \armsepA}
    \node [mynode,fill=white] (v\s9) at ({\xval},{\yA}) {$ $};
    \node [mynode,fill=white] (v\s8) at ({\xval},{\yB}) {$ $};
}

\foreach \s in {1,2,3,4} {
    \pgfmathsetmacro{\xval}{-\armreach + \ringstep*\s}
    \pgfmathsetmacro{\yC}{\armtopy - \armsepA - \dotgap}
    \pgfmathsetmacro{\yD}{\armtopy - \armsepA - \dotgap - \armsepB}
    \node [mynode,fill=white] (v\s7) at ({\xval},{\yC}) {$ $};
    \node [mynode,fill=white] (v\s6) at ({\xval},{\yD}) {$ $};
}

\pgfmathsetmacro{\armbot}{\armtopy - \armsepA - \dotgap - \armsepB}
\def\bottomoffset{.4}   
\pgfmathsetmacro{\ybfour}{\armbot - \bottomoffset}
\pgfmathsetmacro{\ybthree}{\ybfour - \bottomsepB}
\pgfmathsetmacro{\ybtwo}{\ybthree - \bottomdotgap}
\pgfmathsetmacro{\ybone}{\ybtwo - \bottomsepA}

\node [mynode,fill=white] (v15) at (0, {\ybone})   {$ $};
\node[below=-1mm of v15] { $v_{(1,\frac{3m+2}{2})}$};
\node [mynode,fill=white] (v25) at (0, {\ybtwo})   {$ $};
\node [mynode,fill=white] (v35) at (0, {\ybthree}) {$ $};
\node [mynode,fill=white] (v45) at (0, {\ybfour})  {$ $};

\pgfmathsetmacro{\ymidbot}{(\ybtwo + \ybthree)/2}
\node at (0, {\ymidbot+.04}) {\scalebox{.6}{$\vdots$}};

\foreach \s in {1,...,4} {
    \pgfmathsetmacro{\yep}{-\s*\endpointsep - \endpointshift}
    \node [mynode,fill=white] (v\s0) at (\endpointx, {\yep}) {$ $};
    \ifnum \s=1
        \node[right=-1mm of v\s0] {  $v_{(1,0)}$};
    \fi
}

\foreach \s in {1,...,4} {
    \pgfmathsetmacro{\yep}{-\s*\endpointsep - \endpointshift}
    \node [mynode,fill=white] (v\s10) at (-\endpointx, {\yep}) {$ $};
    \ifnum \s=1
        \node[left=-1mm of v\s10] {  $v_{(1,3m+2)}$};
    \fi
}

\foreach \s in {1,2,3,4} {
    \pgfmathsetmacro{\xmid}{(\endpointx + \armreach - \ringstep*\s)/2}
    \pgfmathsetmacro{\ymid}{(-\s*\endpointsep - \endpointshift + \armtopy)/2}
    \node [mynode,fill=white] (vnewR\s) at ({\xmid},{\ymid}) {$ $};
}
\node[right=-1mm of vnewR1] {  $v_{(1,1)}$};
\node[left=-1mm of vnewR4] {  $v_{(n,1)}$};

\foreach \s in {1,2,3,4} {
    \pgfmathsetmacro{\xmid}{-(\endpointx + \armreach - \ringstep*\s)/2}
    \pgfmathsetmacro{\ymid}{(-\s*\endpointsep - \endpointshift + \armtopy)/2}
    \node [mynode,fill=white] (vnewL\s) at ({\xmid},{\ymid}) {$ $};
}

\foreach \s in {1,2,3,4} {
    \draw[-] (v\s1) -- (v\s2);
    \draw[-] (v\s8) -- (v\s9);
}

\foreach \s in {1,2,3,4} {
    \draw[-] (v\s3) -- (v\s4);
    \draw[-] (v\s4) -- (v\s5);
    \draw[-] (v\s5) -- (v\s6);
    \draw[-] (v\s6) -- (v\s7);
}

\foreach \s in {1,2,3,4} {
    \draw[-] (v\s2) -- ++(0,-.25);
    \draw[-] (v\s3) -- ++(0,+.25);
    \pgfmathsetmacro{\xval}{\armreach - \ringstep*\s}
    \pgfmathsetmacro{\ymid}{\armtopy - \armsepA - \dotgap/2}
    \node at ({\xval},{\ymid + .065}) {$\vdots$};
    \draw[-] (v\s8) -- ++(0,-.25);
    \draw[-] (v\s7) -- ++(0,+.25);
    \pgfmathsetmacro{\xlval}{-\armreach + \ringstep*\s}
    \node at ({\xlval},{\ymid + .065}) {$\vdots$};
}

\foreach \s in {1,2,3,4} {
    \draw[-] (v\s0)    -- (vnewR\s);
    \draw[-] (vnewR\s) -- (v\s1);
    \draw[-] (v\s9)    -- (vnewL\s);
    \draw[-] (vnewL\s) -- (v\s10);
}

\draw[dashed,color=lightgray] (v10)  -- (v110)
    node[midway,text=black,yshift=3pt] {  $x_{1}$};
\draw[dashed,color=lightgray] (v20)  -- (v210)
    node[midway,text=black,yshift=3pt] {  $x_{2}$};
\draw[dashed,color=lightgray] (v30)  -- (v310)
    node[midway,text=black,yshift=3pt] {  $x_{n-1}$};
\draw[dashed,color=lightgray] (v40)  -- (v410)
    node[midway,text=black,yshift=3pt] {  $x_{n}$};

\foreach \s in {1,3} {
    \pgfmathtruncatemacro{\nexts}{\s+1}
    \draw[-stealth, thick, realdirected] (v\s0)  -- (v\nexts0);
    \draw[-stealth, thick, realdirected] (v\s0)  -- (v\nexts10);
    \draw[-stealth, thick, realdirected] (v\s10) -- (v\nexts0);
    \draw[-stealth, thick, realdirected] (v\s10) -- (v\nexts10);
}

\pgfmathsetmacro{\arrowdy}{0.47*\endpointsep}
\pgfmathsetmacro{\arrowdydown}{0.47*\endpointsep}

\draw[-stealth, thick, realdirected] (v20)  -- ++(-.5,{-\arrowdydown});
\draw[-stealth, thick, realdirected] (v20)  -- ++(0,{-\arrowdydown});
\draw[-stealth, thick, realdirected] (v210) -- ++(.5,{-\arrowdydown});
\draw[-stealth, thick, realdirected] (v210) -- ++(0,{-\arrowdydown});

\draw[stealth-, thick, realdirected] (v30)  -- ++(-.5,{\arrowdy});
\draw[stealth-, thick, realdirected] (v30)  -- ++(0,{\arrowdy});
\draw[stealth-, thick, realdirected] (v310) -- ++(.5,{\arrowdy});
\draw[stealth-, thick, realdirected] (v310) -- ++(0,{\arrowdy});

\draw[-stealth, thick, realdirected] (start) -- (v10);
\draw[-stealth, thick, realdirected] (start) -- (v110);

\draw[-stealth, thick, realdirected] (v40)  -- (end);
\draw[-stealth, thick, realdirected] (v410) -- (end);

\node[right=-1mm of v11] {  $v_{(1,2)}$};
\node[right=-1mm of v12] {  $v_{(1,3)}$};

\node[left=-1mm of v41] {  $v_{(n,2)}$};
\node[left=-1mm of v42] {  $v_{(n,3)}$};

\end{tikzpicture}
\end{minipage}
\caption{Layout of the start vertex $s$, terminal vertex $t$, and concentric rings
representing variables, with arrows indicating clockwise (true) and
counterclockwise (false) traversal.}
\label{fig:ring_literal}
\end{figure}
Figure \ref{fig:clause_gadget} illustrates that, for each clause $c_j$, we make a directed chain of $n+1$ vertices $c_{(0,j)}, c_{(1,j)}, \ldots, c_{(n,j)}$. For every $0 \leq i < n$, there is a directed edge from $c_{(i,j)}$ to $c_{(i+1,j)}$. 

\begin{figure}[H]
    \centering
\begin{tikzpicture}[scale=1.75]
    \node [mynode,fill=white,draw=none] at (-.55,0) {\scalebox{2.5}{$\cdots$}};
    \node [mynode,fill=white] (clause0) at (-2,0) {$ $};
    \node[above] at (-2,0) {$c_{(0,j)}$};
    \node [mynode,fill=white] (clause1) at (-1.5,0) {$ $};
    \node[above] at (-1.5,0) {$c_{(1,j)}$};
    \node [mynode,fill=white] (clause2) at (-1,0) {$ $};
    \node[above] at (-1,0) {$c_{(2,j)}$};
    \node [mynode,fill=white] (clausen) at (0,0) {$ $};
    \node[above] at (0,0) {$c_{(n,j)}$};
    
    \draw[-stealth, very thick, realdirected] (clause0) -- (clause1);
    \draw[-stealth, very thick, realdirected] (clause1) -- (clause2);
    \draw[-stealth, very thick, realdirected] (clause2.east) -- ++(.15,0);
    \draw[stealth-, very thick, realdirected] (clausen) -- ++(-.35,0);
\end{tikzpicture}
    \caption{Gadget for clause $c_j$ (which may be mirrored)}
    \label{fig:clause_gadget}
\end{figure}

To connect literals to a clause $c_j$, we place the clause vertices within the innermost ring such that:
\begin{itemize}
    \item $c_{(0,j)}$ is the closest vertex in the clause gadget to the edge $\overline{v_{(n,3j-1)}v_{(n,3j)}}$,
    \item the column of vertices $c_{(0,j)}, \dots, c_{(n,j)}$ is between and parallel to the two columns of vertices $v_{(1,3j-1)}, \dots, v_{(n,3j-1)}$ and $v_{(1,3j)}, \dots, v_{(n,3j)}$, 
    \item neither edge $\overline{v_{(n,3j-1)}c_{(n,j)}}$ nor $\overline{v_{(n,3j)}c_{(n,j)}}$ intersects either of the two edges containing $t$,
    \item if $v_{(1,3j)},\ldots,v_{(n,3j)}$ are to the left of $t$, then $c_{(0,j)},…,c_{(n,j)}$ are all to the left of $t$, and 
\item if $v_{(1,3j)},\ldots,v_{(n,3j)} $ are to the right of $t$, then $c_{(0,j)},\ldots,c_{(n,j)} $ are all to the right of $t$.
\end{itemize}  Once the clause gadget is placed, we add edges connecting it to the rings. More precisely, for each $1 \leq i \leq n$, we create bi-directional edges between $v_{(i,3j-1)}$ and $c_{(i,j)}$ as well as between $v_{(i,3j)}$ and $c_{(i,j)}$. As Figure \ref{fig:enter-clause} suggests, when a literal $x_i$ is in $c_j$, we create a directed edge from $v_{(i,3j-1)}$ to $c_{(0,j)}$, and when a literal $\neg x_i$ is in $c_j$, we create a directed edge from $v_{(i,3j)}$ to $c_{(0,j)}$.  This concludes the construction of $G_\phi$. For clarity, we provide an example in Figure \ref{fig:example}.

\begin{figure}[H]
    \centering
    \pgfmathsetmacro{\xbound}{2.35}  
    
    \begin{tikzpicture}[scale=1.25]

    \node [mynode,fill=white,draw=none] at (-\xbound,1.87) {\scalebox{1}{$\vdots$}};
    \node [mynode,fill=white,draw=none] at (\xbound,1.87) {\scalebox{1}{$\vdots$}}; 
    \node [mynode,fill=white,draw=none]  at (-\xbound,1.57) {\scalebox{1}{$\vdots$}};
    \node [mynode,fill=white,draw=none]  at (\xbound,1.57) {\scalebox{1}{$\vdots$}};
    \node [mynode,fill=white,draw=none] at (-\xbound,1.27) {\scalebox{1}{$\vdots$}};
    \node [mynode,fill=white,draw=none] at (\xbound,1.27) {\scalebox{1}{$\vdots$}};
    \node [mynode,fill=white,draw=none] at (-\xbound,.97) {\scalebox{1}{$\vdots$}};
    \node [mynode,fill=white,draw=none] at (\xbound,.97) {\scalebox{1}{$\vdots$}};
    
    \node [mynode,fill=white] (v1left) at (-\xbound,2) {$ $};
    \node[left] at (-\xbound,2) {$v_{(1,3j-1)}$};

    \node [mynode,fill=white] (v1right) at (\xbound,2) {$ $};
    \node[right] at (\xbound,2) {$v_{(1,3j)}$};

    \node [mynode,fill=white] (valeft) at (-\xbound,1.7) {$ $};
    \node[left] at (-\xbound,1.7) {$v_{(\alpha,3j-1)}$};
    \node [mynode,fill=white] (varight) at (\xbound,1.7) {$ $};
    \node[right] at (\xbound,1.7) {$v_{(\alpha,3j)}$};

    \node [mynode,fill=white] (vbleft) at (-\xbound,1.4) {$ $};
    \node[left] at (-\xbound,1.4) {$v_{(\beta,3j-1)}$};
    \node [mynode,fill=white] (vbright) at (\xbound,1.4) {$ $};
    \node[right] at (\xbound,1.4) {$v_{(\beta,3j)}$};

    \node [mynode,fill=white] (vcleft) at (-\xbound,1.1) {$ $};
    \node[left] at (-\xbound,1.1) {$v_{(\gamma,3j-1)}$};
    \node [mynode,fill=white] (vcright) at (\xbound,1.1) {$ $};
    \node[right] at (\xbound,1.1) {$v_{(\gamma,3j)}$};


    
    \node [mynode,fill=white] (vnleft) at (-\xbound,.8) {$ $};
    \node[left] at (-\xbound,.8) {$v_{(n,3j-1)}$};
    \node [mynode,fill=white] (vnright) at (\xbound,.8) {$ $};
    \node[right] at (\xbound,.8) {$v_{(n,3j)}$};

    \node [mynode,fill=white,draw=none] at (0,-.5) {$\vdots$};
    \node [mynode,fill=white,draw=none] at (0,-1) {$\vdots$};
    \node [mynode,fill=white,draw=none] at (0,-1.5) {$\vdots$};
    \node [mynode,fill=white,draw=none] at (0,-2) {$\vdots$};
    
    \node [mynode,fill=white] (clause0) at (0,.25) {$ $};
    \node[below=-1mm of clause0] {\scriptsize $c_{(0,j)}$};
    
    \node [mynode,fill=white] (clause1) at (0,-.25) {$ $};
    \node[below=-1mm of clause1] {\scriptsize $c_{(1,j)}$};
    
    \node [mynode,fill=white] (clausea) at (0,-.75) {$ $};
     \node[below=-1mm of clausea] {\scriptsize $c_{(\alpha,j)}$};
    
    \node [mynode,fill=white] (clauseb) at (0,-1.25) {$ $};
     \node[below=-1mm of clauseb] {\scriptsize $c_{(\beta,j)}$};
    
    \node [mynode,fill=white] (clausec) at (0,-1.75) {$ $};
     \node[below=-1mm of clausec] {\scriptsize $c_{(\gamma,j)}$};
     
    \node [mynode,fill=white] (clausen) at (0,-2.25) {$ $};
     \node[below=-1mm of clausen] {\scriptsize $c_{(n,j)}$};


     \draw[-stealth,realdirected, very thick] (valeft) -- (clause0);
     \draw[-stealth,realdirected, very thick] (vbright) -- (clause0);
     \draw[-stealth,realdirected, very thick] (vcleft) -- (clause0);
    
    \draw[-] (v1left) -- (clause1);
    \draw[-] (v1right) -- (clause1);

    \draw[-] (valeft) -- (clausea);
    \draw[-] (varight) -- (clausea);

    \draw[-] (vbleft) -- (clauseb);
    \draw[-] (vbright) -- (clauseb);

    \draw[-] (vcleft) -- (clausec);
    \draw[-] (vcright) -- (clausec);
    
    \draw[-] (vnleft) -- (clausen);
    \draw[-] (vnright) -- (clausen);
    
\end{tikzpicture}
    \caption{We depict connecting literals to clause $c_j = (x_\alpha \vee \neg x_\beta \vee x_\gamma)$ where $\alpha < \beta < \gamma$. This structure is rotated $90^\circ$ or  $-90^\circ$ depending on which side of the 
graph the clause appears. Note that, for clarity, only the connecting edges between literals and clauses are shown; the edges depicted in Figures \ref{fig:ring_literal}-\ref{fig:clause_gadget} remain intact.}
    \label{fig:enter-clause}
\end{figure}

\begin{figure*}[t]
\centering
\begin{minipage}[t]{0.48\textwidth}
\input{hamiltonian_path_graph_tikz_feedback}
\end{minipage}
\hspace{1.25em}
\begin{minipage}[t]{0.48\textwidth}
\input{hamiltonian_path_graph_solution_tikz_feedback}
\end{minipage}
   \caption{Left: Highlighting the first two clauses in the graph $G_\phi$ corresponding to the formula $\phi = (x_1 \lor x_2 \lor x_3) \land\\ (x_1 \lor x_4 \lor x_5) \land (\neg x_2 \lor \neg x_3 \lor \neg x_5) \land (\neg x_1 \lor \neg x_4 \lor \neg x_5) \land (x_3 \lor x_4 \lor x_5) \land (\neg x_1 \lor \neg x_2 \lor \neg x_3)$. Right: A non-crossing Hamiltonian path corresponding to the satisfying assignment $x_1 = \text{False}, x_2 = \text{False}, x_3 = \text{True}, x_4 = \text{True}, x_5 = \text{True}$.}
    \label{fig:example}
\end{figure*}

For sufficiently large $n$ and $m$, the graph $G_\phi$ is non-planar. One can verify this by exhibiting a $K_{3,3}$ minor, taking three non-consecutive vertices from one clause column as one side of the bipartition and three non-consecutive vertices from another clause column as the other; we leave the details as an exercise.

\subsection{Hamiltonian Path Construction}

We now prove that a non-crossing Hamiltonian path can be constructed from a satisfying assignment. An example is depicted in Figure~\ref{fig:example} (right). The following theorem formalizes this result.

\begin{theorem} \label{NCHP_forward}
    If $\phi$ is satisfiable, then $G_\phi$ contains a non-crossing Hamiltonian path.
\end{theorem}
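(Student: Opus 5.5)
Given a satisfying assignment $\tau$, we build the path explicitly, following the standard Hamiltonian-path reduction. Start at $s$ and traverse the rings $x_1, x_2, \ldots, x_n$ in order. Ring $x_i$ is traversed clockwise (from $v_{(i,0)}$ to $v_{(i,3m+2)}$) if $\tau(x_i)$ is true, and counterclockwise otherwise. Consecutive rings are joined by the directed edge from the exit end of ring $x_i$ to the entry end of ring $x_{i+1}$, and the path finishes with the edge from the exit end of $x_n$ to $t$.

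For each clause $c_j$, fix one literal satisfied by $\tau$, say on variable $x_{i_j}$. Its clause gadget is picked up as a detour from ring $x_{i_j}$ between $v_{(i_j,3j-1)}$ and $v_{(i_j,3j)}$. We split this into two cases.
\begin{itemize}
\item If the literal is $x_{i_j}$, the ring is traversed clockwise, so $v_{(i_j,3j-1)}$ is reached before $v_{(i_j,3j)}$. Replace the ring edge $v_{(i_j,3j-1)}v_{(i_j,3j)}$ by the detour $v_{(i_j,3j-1)} \to c_{(0,j)} \to c_{(1,j)} \to \cdots \to c_{(n,j)} \to v_{(n,3j)}$.
\item If the literal is $\neg x_{i_j}$, the ring is traversed counterclockwise, so $v_{(i_j,3j)}$ is reached first. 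Use the directed edge $v_{(i_j,3j)} \to c_{(0,j)}$, run down the chain, and leave through $c_{(n,j)} \to v_{(n,3j-1)}$.
\end{itemize}
The exit goes to ring $x_n$, not back to ring $i_j$, so the plan must reconcile this. The cleanest fix I would try is the one suggested by the chain edges $v_{(i,3j-1)}c_{(i,j)}$ and $v_{(i,3j)}c_{(i,j)}$: leave the chain at $c_{(i_j,j)}$ back to the other column vertex of ring $i_j$. The remaining chain vertices $c_{(i,j)}$ with $i \neq i_j$, $i \ge 1$, are then absorbed as two-edge detours $v_{(i,3j-1)} \to c_{(i,j)} \to v_{(i,3j)}$ (or the reverse) replacing ring edges of the respective rings. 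I would formalize exactly this assignment: $c_{(0,j)},\ldots,c_{(i_j,j)}$ go to ring $i_j$, and each other $c_{(i,j)}$ goes to ring $i$.

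\textbf{Hamiltonicity.} Each ring vertex is visited once, each clause vertex is visited once, and every step uses an edge of the correct direction. This is a direct check against the construction and Convention~\ref{conv:arrows}.

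\textbf{Non-crossing (the main obstacle).}
\begin{itemize}
\item Ring edges of different rings do not cross because the rings are concentric.
\item The inter-ring connector edges live near the $s$–$t$ axis, where the clause placement conditions guarantee that no clause edge interferes. The requirements that edges to $c_{(n,j)}$ avoid the edges at $t$, and that a clause lies on the same side of $t$ as its columns, are used exactly here.
\item Within a clause slot $j$, all used edges lie in the strip between the two columns $3j-1$ and $3j$. The entry $v_{(i_j,\cdot)} \to c_{(0,j)}$ passes the rings $i>i_j$ only at the replaced edges $v_{(i,3j-1)}v_{(i,3j)}$, and those edges are not used, since each such ring instead detours to $c_{(i,j)}$.
\item The detour edges $v_{(i,\cdot)}c_{(i,j)}$ for different $i$ are nested, because the column of clause vertices is parallel to and ordered like the ring columns. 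So they form non-crossing ``V'' shapes stacked in order, and the chain segment $c_{(0,j)}\ldots c_{(i_j,j)}$ sits above all V's for $i>i_j$.
\end{itemize}
I would verify these nesting claims by an ordering argument along the column direction. This is the delicate part, requiring care about the precise placement, and I would lean on Figure~\ref{fig:example} as the template.
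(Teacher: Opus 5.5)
Your proposal is correct and is essentially the paper's construction: ring $i_j$ picks up $c_{(0,j)},\dots,c_{(i_j,j)}$ through the literal edge, and each later ring $i>i_j$ picks up only $c_{(i,j)}$ by a two-edge detour, matching the paper's Case~1/Case~2. The paper takes $i_j$ to be the first satisfying ring, but your free choice also works, since rings $i<i_j$ keep their ring edge $v_{(i,3j-1)}v_{(i,3j)}$ outward of the literal edge. In the write-up, delete the abandoned detour through $c_{(n,j)}\to v_{(n,3j)}$, and state the absorbed vertices as $i>i_j$ rather than $i\neq i_j$.
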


\begin{proof}
Assume $\phi$ has a satisfying assignment. To construct a non-crossing Hamiltonian path, we start at $s$ and traverse rings $i = 1,\ldots,n$ in order as follows:

If $x_i$ is true, we proceed clockwise around ring $i$ and potentially take detours to visit the clause gadgets. Let’s consider a clause $c_j$ and determine if and how we must detour into its gadget, ensuring that no crossings occur.
\filbreak
\begin{itemize}
    \item \textbf{Case 1:} $c_{(0,j)}$ has not been picked up:
    \begin{itemize}[topsep=0pt]
        \item If $x_i$ does not satisfy clause $c_j$, no detour is taken.
        
        \item If $x_i$ satisfies clause $c_j$, the path takes a detour after reaching $v_{(i,3j-1)}$. This detour collects all vertices $c_{(0,j)}, c_{(1,j)}, \ldots, c_{(i,j)}$ before rejoining ring $i$ at $v_{(i,3j)}$. Covering vertices $c_{(1,j)}, \ldots, c_{(i,j)}$ in this detour ensures that no additional detours are needed for clause $c_j$ in previously traversed rings. Additionally, because the detour is directed inward (or toward the center of the ring structure), it avoids crossing any edges from prior parts of the path.
    \end{itemize}
    
    \item \textbf{Case 2:} $c_{(0,j)}$ has already been picked up:
    \begin{itemize}[topsep=0pt]
        \item The path takes a detour after reaching $v_{(i,3j-1)}$ to avoid crossing the edge that was used to pick up $c_{(0,j)}$. This detour picks up only $c_{(i,j)}$ and rejoins at $v_{(i,3j)}$, ensuring all vertices of this clause gadget are collected without introducing crossings as we traverse the rings.
    \end{itemize}
\end{itemize}

Otherwise, $x_i$ is false and we traverse analogously in the counterclockwise direction.

After completing the traversal of ring $n$, the path concludes at $t$. To confirm that all vertices are covered: all ring vertices are visited by the traversal. Each clause gadget $c_j$, containing vertices $c_{(0,j)},\ldots,c_{(n,j)}$, is fully visited as follows. Vertices $c_{(0,j)},\ldots,c_{(i,j)}$ are picked up when the first literal (either $x_i$ or $\neg x_i$) satisfying $c_j$ is encountered. For each $i' > i$, $c_{(i',j)}$ is picked up when detouring on ring $i'$. Thus, all vertices are covered, completing a Hamiltonian path.
\end{proof}

\subsection{Non-Crossing Hamiltonian Path $\implies$ Satisfying Assignment}
We now prove  the converse in the following theorem. The main idea is that if the clause vertices were used to jump from one ring to another, then a ring vertex would be forced to become an endpoint of $\pi$, contradicting the fact that only $s$ and $t$ are endpoints.
\filbreak
\begin{theorem}\label{NCHP_backward}
    If $G_\phi$ contains a non-crossing Hamiltonian path, then $\phi$ is satisfiable.
\end{theorem}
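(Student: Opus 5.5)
We prove the contrapositive structurally: from a non-crossing Hamiltonian path $\pi$ in $G_\phi$ we read off an assignment and show that it satisfies every clause.

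\textbf{Endpoints and ring order.} Since $s$ has only out-edges and $t$ has only in-edges, $\pi$ starts at $s$ and ends at $t$. The only edges between different rings are the directed connectors from the ends of ring $i$ to the ends of ring $i+1$, apart from the detours through clause gadgets. So the first step is to show that $\pi$ visits the rings in the order $1,\dots,n$. It enters each ring at $v_{(i,0)}$ or $v_{(i,3m+2)}$, and it leaves ring $i$ only from the opposite end after covering every vertex of the ring.

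\textbf{Clause detours stay on one ring.} This is the key lemma, matching the paper's hint: a clause gadget cannot be used to jump between rings. Suppose $\pi$ enters a clause column from $v_{(i,\cdot)}$, either via $c_{(0,j)}$ or via a bidirectional edge to $c_{(i,j)}$. Suppose it then leaves via $c_{(i',j)}$ to ring $i'\neq i$. I would show that some ring vertex $v_{(i,3j-1)}$ or $v_{(i,3j)}$ is then stranded. Its ring neighbours and its clause neighbour $c_{(i,j)}$ are already used, or cannot be reached without crossing the detour segment. The geometry supports this: the column lies between the two vertical columns $v_{(\cdot,3j-1)}$ and $v_{(\cdot,3j)}$, and the rings are concentric. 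Hence the stranded vertex would have degree at most one in $\pi$, making it an endpoint, which contradicts the fact that the endpoints are $s$ and $t$.

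The edges in the clause chain are directed $c_{(k,j)}\to c_{(k+1,j)}$. So any maximal subpath of $\pi$ inside a gadget is a consecutive run $c_{(a,j)},\dots,c_{(b,j)}$. It is entered at $c_{(a,j)}$, either from $v_{(a,\cdot)}$ or, when $a=0$, via a literal edge. It is exited at $c_{(b,j)}$ to $v_{(b,3j-1)}$ or $v_{(b,3j)}$. I would argue that when entry and exit lie on different rings, the ring is cut. The two ring vertices $v_{(i,3j-1)}$ and $v_{(i,3j)}$ lie on a ring arc that $\pi$ must also traverse. Traversing it requires passing through the column region, and this creates either a crossing with the detour or a stranded vertex. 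I expect this planar-geometry case analysis to be the main obstacle. I would handle it by considering the closed curve formed by the detour together with the ring arcs, and then applying a Jordan-curve style argument to the vertices enclosed.

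\textbf{Reading off the assignment.} Given the two lemmas, each ring is traversed as a contiguous walk with single-ring detours. We set $x_i$ true iff $\pi$ enters ring $i$ at $v_{(i,0)}$, that is, traverses it clockwise. Vertex $c_{(0,j)}$ has in-edges only from literal vertices, so $\pi$ reaches it from some $v_{(i,3j-1)}$ with $x_i\in c_j$ or from some $v_{(i,3j)}$ with $\neg x_i\in c_j$.

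It remains to show that this edge direction matches the traversal direction. Suppose, say, $\pi$ goes counterclockwise on ring $i$ but leaves from $v_{(i,3j-1)}$ into the gadget. The detour must return to ring $i$ at $v_{(i,3j)}$, because $c_{(i,j)}$ is adjacent only to $v_{(i,3j-1)}$ and $v_{(i,3j)}$. On a counterclockwise traversal, $v_{(i,3j)}$ has already been visited before $v_{(i,3j-1)}$. Consequently the return is impossible, or else the vertex $v_{(i,3j-2)}$ following $v_{(i,3j-1)}$ becomes stranded. This contradicts Hamiltonicity. Hence every clause contains a literal made true by the assignment, so $\phi$ is satisfiable.
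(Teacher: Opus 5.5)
\textbf{Gap: the key lemma is not proved, and the local argument you sketch for it fails.} The rest of your skeleton matches the paper: show that every clause run starts and ends on the same ring, deduce that rings are traversed in order as contiguous blocks, read off the assignment, and check that the literal edge into $c_{(0,j)}$ agrees with the traversal direction. Your last step is more explicit than the paper's. The sketch of the lemma goes wrong in two places.

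\begin{itemize}
\item Take a run $v_{(i,3j-1)}\to c_{(0,j)}\to\cdots\to c_{(i',j)}\to v_{(i',\cdot)}$ with $i'<i$. It never touches $c_{(i,j)}$, so the partner $v_{(i,3j)}$ on the entry ring is not stranded. It can still use its edge to $c_{(i,j)}$, and in the described embedding that segment does not meet the run.
\item Take a run that enters at $c_{(i,j)}$ and exits on a ring $i'>i$. You ignore that the partner vertex may have a literal out-edge to $c_{(0,j)}$. That edge can start a new run instead of leaving the partner with degree one.
\end{itemize}

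Neither situation is ruled out by a crossing, so a Jordan-curve argument will not close the gap. The contradiction lives on a different ring from the one you examine.

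\textbf{Missing idea: an extremal choice.} Among all runs $uCv$ with $u,v$ on different rings, pick one minimizing $r_1=\min(\operatorname{ring}(u),\operatorname{ring}(v))$. Let $u=v_{(r_1,3j-1)}$ be its endpoint on ring $r_1$; the case $u=v_{(r_1,3j)}$ is symmetric. Chain edges only point forward, so $c_{(r_1,j)}\in C$ whichever way $\pi$ traverses the run. Then $v_{(r_1,3j)}$ has no second $\pi$-edge:
\begin{itemize}
\item the edge to $u$ would strand the degree-two vertex $v_{(r_1,3j-2)}$;
\item $c_{(r_1,j)}$ is interior to the run, so both its $\pi$-edges are taken;
\item an edge into $c_{(0,j)}$, if that vertex is not already in $C$, starts a chain blocked at $c_{(r_1,j)}$. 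That chain either gets stuck or exits on a ring $r_3<r_1$, contradicting minimality.
\end{itemize}
So $v_{(r_1,3j)}$ would be an endpoint of $\pi$, which is impossible.

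This argument is purely combinatorial and never uses the non-crossing hypothesis. The planar-geometry case analysis you expected to be the main obstacle is not needed here.
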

\begin{proof}
Let $\pi$ be a non-crossing Hamiltonian path in $G_\phi$. It suffices to show that every clause detour in $\pi$ begins and ends on the same ring.

Suppose for contradiction that $\pi$ contains a subpath, or the reverse of a subpath, of the form $uCv$, where $u$ and $v$ are ring vertices on different rings and $C$ is a nonempty sequence of clause vertices. Among all such subpaths, choose one minimizing $r_1 = \min(\operatorname{ring}(u), \operatorname{ring}(v))$, with $u$ on ring $r_1$ and $v$ on ring $r_2 > r_1$.

We have $u \in \{v_{(r_1,3j-1)}, v_{(r_1,3j)}\}$ for some $j$; assume without loss of generality $u = v_{(r_1,3j-1)}$. Since any clause-gadget path from $r_1$ to $r_2 > r_1$ passes through $c_{(r_1,j)}$, we have $c_{(r_1,j)} \in C$.

We now show $v_{(r_1,3j)}$ has no valid neighbor in $\pi$. Its neighbors in $G_\phi$ are $v_{(r_1,3j-1)}$, $v_{(r_1,3j+1)}$, $c_{(r_1,j)}$, and (if $\neg x_{r_1} \in c_j$) $c_{(0,j)}$. Since $u = v_{(r_1,3j-1)}$ is not an endpoint of $\pi$, both its $\pi$-edges are used, one into $C$; if the other went to $v_{(r_1,3j)}$, then $v_{(r_1,3j-2)}$ (whose only neighbors are $v_{(r_1,3j-3)}$ and $v_{(r_1,3j-1)}$) would be forced to be an endpoint of $\pi$, a contradiction. Since $c_{(r_1,j)}$ is
interior to $uCv$, both its $\pi$-edges are used, ruling it out as well. If $v_{(r_1,3j)}$ uses $c_{(0,j)}$, then $\pi$ proceeds along $c_{(0,j)} \to c_{(1,j)}
\to \cdots$; since $c_{(r_1,j)}$ is already visited, this chain exits at some ring $r_3 < r_1$, contradicting minimality. Thus the only remaining neighbor is $v_{(r_1,3j+1)}$, forcing $v_{(r_1,3j)}$ to be an endpoint of $\pi$, but neither $s$ nor $t$ is a ring vertex, so we have a contradiction. The case $u = v_{(r_1,3j)}$ is symmetric.

Hence every clause detour stays on a single ring, each ring is traversed consistently
in one direction, and we obtain a well-defined truth assignment. Since $c_{(0,j)}$ is
reachable only via a directed literal edge from a true literal in $c_j$, every clause
is satisfied and $\phi$ is satisfiable.
\end{proof}

 So, $G_\phi$ contains a non-crossing Hamiltonian path if and only if $\phi$ is satisfiable. Since $G_\phi$ contains $3n(m+1) + m(n+1) + 2$ vertices, it can be constructed in polynomial time. Thus, \textsc{NCHP} is NP-complete.

\section{NP-Completeness of NCHC}

We now extend the reduction in Section \ref{sec:path_reduction} to the non-crossing Hamiltonian \emph{cycle} problem, which we denote \textsc{NCHC}. We show that \textsc{3-SAT} $\leq_p$ \textsc{NCHC} by augmenting the graph $G_\phi$ with additional \emph{dummy ring pairs} that provide a return path from $t$ back to $s$.

\subsection{Augmented Graph Construction}

Given a \textsc{3-SAT} instance $\phi$ with $n$ variables and $m$ clauses (with $m$ even), we construct an augmented graph $G_\phi^{\mathrm{cycle}}$ as follows. We make two modifications to the graph $G_\phi$ from Section~\ref{sec:path_reduction}: we interleave dummy clause vertices in each clause chain, and we add dummy ring pairs between consecutive real rings. Unless otherwise stated, the figures adhere to the following convention.

\begin{convention}\label{conv:arrows_cycle} In the graph $G_\phi^{\mathrm{cycle}}$, solid grey lines are bidirectional edges added to $G_\phi$; solid purple arrows are new unidirectional edges. New vertices are purple. \end{convention}

For each clause $c_j$, we introduce dummy clause vertices $e_{(i,j)}$ and $e_{(i,j)}'$ for each $1 \leq i \leq n-1$, and replace the directed edge $c_{(i,j)} \to c_{(i+1,j)}$ with the directed chain
$
c_{(i,j)} \to e_{(i,j)} \to e_{(i,j)}' \to c_{(i+1,j)}.$

For each $1 \leq i \leq n-1$, we insert a \emph{dummy pair of rings}, an outer ring $d_i$ and an inner ring $d_i'$,  geometrically between the real rings for $x_i$ and $x_{i+1}$. Each dummy ring has $3(m+1)$ vertices (e.g.,  $d_i$ contains $d_{(i,0)}, d_{(i,1)}, \ldots, d_{(i,3m+2)}$ and similarly for $d_i'$), arranged in the same layout as a real ring except at the ends. We perturb the ends $d_{(i,3m+2)}$ and $d_{(i, 3m+2)}'$ slightly to the left and perturb $d_{(i,0)}$ and $d_{(i, 0)}'$ slightly to the right so they are not collinear with the ends of the real rings. These dummy rings have the same bidirectional edges that real rings have, with these added:
\begin{itemize}
    \item \textbf{Straight edges}: for every $0 \leq k \leq 3m+2$, a bidirectional edge between $d_{(i,k)}$ and $d_{(i,k)}'$.
    \item \textbf{Diagonal edges}: for every $0 \leq k < 3m+2$, bidirectional edges between $d_{(i,k)}$ and $d_{(i,k+1)}'$, and between $d_{(i,k)}'$ and $d_{(i,k+1)}$.
    \item \textbf{Dummy clause connections}: for every $1 \leq j \leq m$, bidirectional edges between $d_{(i,3j-1)}$ and $e_{(i,j)}$, between $d_{(i,3j)}$ and $e_{(i,j)}$, between $d_{(i,3j-1)}'$ and $e_{(i,j)}'$, and between $d_{(i,3j)}'$ and $e_{(i,j)}'$. 
\end{itemize}

We add directed transition edges connecting the dummy rings to the rest of the graph:
\begin{itemize}
    \item $d_{(1,0)} \to s$ and $d_{(1,3m+2)} \to s$.
    \item $t \to d_{(n-1,0)}'$ and $t \to d_{(n-1,3m+2)}'$.
    \item For each $2 \leq i \leq n-1$, directed edges $d_{(i,0)} \to d_{(i-1,3m+2)}'$ and $d_{(i,3m+2)} \to d_{(i-1,0)}'$.

\end{itemize}

Note that the directions of the dummy ring transitions are the \emph{reverse} of the real ring transitions: real ring transitions go from outer to inner (increasing $i$), while dummy ring transitions go from inner to outer (decreasing $i$). We illustrate the augmented graph in Figure~5 (page~\pageref{fig:cycle_graph}).

\begin{figure*}[t]
 \centering
\begin{minipage}[t]{0.48\textwidth}
\input{hamiltonian_cycle_graph_tikz}
\end{minipage}
\hspace{1.25em}
\begin{minipage}[t]{0.48\textwidth}
\input{hamiltonian_cycle_graph_solution_tikz}
\end{minipage}
   \caption{Left: The graph $G_\phi^{\mathrm{cycle}}$ corresponding to the formula $\phi = (x_1 \lor x_2 \lor x_3) \land (x_1 \lor \neg x_2 \lor \neg x_3) \land (x_1 \lor x_2 \lor \neg x_3) \land (\neg x_1 \lor \neg x_2 \lor x_3) \land (\neg x_1 \lor \neg x_2 \lor \neg x_3) \land (\neg x_1 \lor x_2 \lor x_3)$. Right: A non-crossing Hamiltonian cycle corresponding to the satisfying assignment $x_1 = \text{True}, x_2 = \text{False}, x_3 = \text{True}$.}
    \label{fig:cycle_graph}
\end{figure*}

\subsection{Hamiltonian Cycle Construction}
We prove that a non-crossing Hamiltonian cycle can be constructed from a satisfying assignment, which we also illustrate in Figure \ref{fig:cycle_graph}. The following theorem formalizes this result.

\begin{theorem}
    If $\phi$ is satisfiable, then $G_\phi^{\mathrm{cycle}}$ contains a non-crossing Hamiltonian cycle.
\end{theorem}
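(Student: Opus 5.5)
We would build the cycle in two halves. The first half is the non-crossing $s$-to-$t$ path $\pi$ produced by Theorem~\ref{NCHP_forward} from the satisfying assignment, adapted to the subdivided clause chains. The second half is a return path $\rho$ from $t$ back to $s$ that threads through the dummy ring pairs $d_{n-1}',d_{n-1},\ldots,d_1',d_1$ and picks up every dummy clause vertex $e_{(i,j)},e_{(i,j)}'$ not already used by $\pi$. The cycle is $\pi$ followed by $\rho$.

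\textbf{Step 1: adjust $\pi$ to the subdivided chains.} When $\pi$ first enters clause $c_j$ (Case~1 of Theorem~\ref{NCHP_forward}) on ring $i$, it must now follow $c_{(0,j)}\to\cdots\to c_{(i,j)}$. In $G_\phi^{\mathrm{cycle}}$ this forces it through $e_{(k,j)},e_{(k,j)}'$ for all $k<i$. The detours of Case~2, which pick up only $c_{(i',j)}$ for $i'>i$, are unchanged.

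\textbf{Step 2: define $\rho$ on each dummy pair $(d_k,d_k')$.} Enter at an end of $d_k'$ coming from $t$ or from $d_{k+1}$. Sweep the pair in a zig-zag ladder, using straight and diagonal edges, so that both rings are covered while travelling in one direction. The path leaves at the opposite end of the outer ring $d_k$, and from there the transition edge to $d_{k-1}'$ (or to $s$ when $k=1$) takes it onward. The sweep direction alternates across pairs, so each exit end matches the allowed transitions $d_{(k,0)}\to d_{(k-1,3m+2)}'$ and $d_{(k,3m+2)}\to d_{(k-1,0)}'$.

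\textbf{Step 3: pick up the remaining dummy clause vertices.} At column $3j-1,3j$ of pair $k$, if $e_{(k,j)}$ and $e_{(k,j)}'$ are unvisited, the ladder replaces its local rungs with the detours $d_{(k,3j-1)}\to e_{(k,j)}\to d_{(k,3j)}$ on the outer ring and $d_{(k,3j-1)}'\to e_{(k,j)}'\to d_{(k,3j)}'$ on the inner ring. A concrete local pattern is: go along $d_k'$ through the column pair with a detour, cross by a straight or diagonal rung, and come back along $d_k$ with a detour. This works if we let the ladder run as two parallel monotone strands in those columns and connect them with rungs only between clause columns, for instance at the filler indices $3j+1$.

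\textbf{Step 4: verify coverage and non-crossing.} Every vertex is covered: real ring vertices and $c$-vertices by $\pi$, dummy ring vertices by the ladders, and each $e$-vertex either by $\pi$ (index $k<$ the first satisfying ring) or by $\rho$ (otherwise).

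For non-crossing, note three things:
\begin{itemize}
\item $\pi$ is non-crossing by Theorem~\ref{NCHP_forward}. The dummy vertices lie geometrically between rings, and the chain edges through $e$ vertices stay inside the clause column.
\item $\rho$ stays in the annuli between consecutive real rings, except for its short transition edges.
\item The $e$-detours of $\rho$ go inward into the clause column only at levels $k\ge i$, where $\pi$ uses just the disjoint stubs $v\!-\!c_{(k',j)}\!-\!v$.
\end{itemize}

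\textbf{Main obstacle.} The delicate part is the geometry of the transition edges. The dummy transitions $d_{(k,\cdot)}\to d_{(k-1,\cdot)}'$ must pass the real ring $x_k$ near its ends, where $\pi$ uses the real transitions $x_{k}\to x_{k+1}$, and these can cross. We would handle this with the perturbation of dummy ends stated in the construction. The plan is to choose each pair's sweep direction according to the direction of the adjacent real ring, so that the dummy transition edge sits on the side of the real ring end not used by $\pi$'s transition. Whether a consistent choice exists for all $k$ is the step to check first. If the alternating directions conflict with the assignment, we would instead make the ladder end on the required side by an extra U-turn using the diagonal edges.
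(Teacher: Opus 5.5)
Your Step~1 and the coverage count are fine. The return pass, however, is the whole content of this theorem, and you leave it open; the parts you do specify fail.

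\textbf{Choosing each pair's traversal.} In Step~2, every dummy transition $d_{(k,0)}\to d'_{(k-1,3m+2)}$, $d_{(k,3m+2)}\to d'_{(k-1,0)}$ switches sides. So consecutive end-to-end sweeps must all run in the \emph{same} direction, and alternation is impossible. More importantly, the question you defer has a negative answer in general. To miss $\pi$'s edges at the opening of ring $x_k$, the edge leaving $d_k$ must start on the side where $\pi$ enters $x_k$ and land on the side where $\pi$ leaves it. Similarly, $t\to d'_{n-1}$ must land on the side where $\pi$ leaves $x_n$, and $d_1\to s$ must start on the side where $\pi$ enters $x_1$. Hence pair $k$ must be entered on the side where $\pi$ leaves $x_{k+1}$ and left on the side where $\pi$ enters $x_k$. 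An end-to-end sweep is therefore correct exactly when $x_k$ and $x_{k+1}$ have the same value. When they differ, the pair must be entered and left on the same side. This is what Table~\ref{tab:cycle-options} and Options~1--4 encode. Your fallback of a ladder with ``an extra U-turn'' cannot realize the same-side case: a ladder uses up both rings as it advances, so nothing is left to come back along. The pair must instead be traversed as a U: all of $d'_k$ outward, a straight hop at the far end, then all of $d_k$ back.

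\textbf{Clause detours inside a sweep.} Your Step~3 pattern cannot be used inside an end-to-end sweep. Let $i_0$ be the first ring satisfying $c_j$, and take $k\ge i_0$. Then $\pi$'s clause edges at $c_j$ from rings $i_0,\dots,k$ run inward between columns $3j-1$ and $3j$. They cross every ring edge and diagonal of pair $k$ spanning those two columns; this is the same crossing that forces Case~2 in Theorem~\ref{NCHP_forward}. So inside pair $k$ that cut can be crossed only through $e_{(k,j)}$ and $e'_{(k,j)}$.

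An end-to-end sweep must cross that cut an odd number of times. Your two separate detours cross it twice, and the ``come back along $d_k$'' variant leaves you stranded on the wrong side. The missing idea is to cross exactly once, using the directed chain edge:
$d_{(k,3j-1)}\to e_{(k,j)}\to e'_{(k,j)}\to d'_{(k,3j)}\to d_{(k,3j)}$,
mirrored for the other direction. The zig-zag must be phased so that it reaches $d_{(k,3j-1)}$ on the unprimed ring, as in Figure~\ref{fig:options}. Your two-strand pattern with separate detours is exactly what the U-shaped traversal needs. So you have both local pieces, but you have attached each one to the wrong traversal type.
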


\begin{proof}
    
The cycle is constructed in two phases: a forward pass from $s$ to $t$ using the real rings, and a return pass from $t$ to $s$ using the dummy rings.

The forward pass proceeds identically to the Hamiltonian path construction in Theorem~\ref{NCHP_forward}, with one modification: when a path detour enters clause gadget $c_j$ for the first time at ring $x_{i_0}$ (the first satisfying ring), it follows the extended chain
\[
c_{(0,j)}, c_{(1,j)}, e_{(1,j)}, e_{(1,j)}', c_{(2,j)}, e_{(2,j)}, e_{(2,j)}', \ldots, c_{(i_0,j)},
\]
automatically collecting all dummy clause vertices $e_{(i,j)}$ and $e_{(i,j)}'$ for $1 \leq i < i_0$. After the forward pass, the only uncollected vertices are the dummy ring vertices and the dummy clause vertices $e_{(i,j)}$ and $e_{(i,j)}'$ for $i \geq i_0$.

Upon reaching $t$, the path transitions to the dummy rings via one of the directed edges $t \to d_{(n-1,\cdot)}'$, entering dummy pair $n-1$ at the endpoint opposite to where ring $x_n$ exited. The return pass then spirals outward through dummy pairs $n-1, n-2, \ldots, 1$, collecting all remaining dummy clause vertices, and finally returns to $s$ via a directed edge from $d_1$. Each dummy pair $i$ is traversed using one of four options:

\begin{enumerate}
    \item Enter at $d'_{(i,0)}$. As depicted in the upper part of Figure~\ref{fig:options}, we
    traverse clockwise along $d'_i$ to $d'_{(i,3m+2)}$,
    dipping into clause gadgets as needed. Hop up to $d_{(i,3m+2)}$. Traverse
    counterclockwise along $d_i$ back to $d_{(i,0)}$, dipping into clause gadgets as
    needed. Exit at $d_{(i,0)}$.

    \item The mirror image of Option~1. Enter at $d'_{(i,3m+2)}$, traverse
    counterclockwise along $d'_i$ to $d'_{(i,0)}$, hop up, traverse clockwise along $d_i$
    to $d_{(i,3m+2)}$, and exit at $d_{(i,3m+2)}$.

    \item Enter at $d'_{(i,0)}$. As depicted in the lower part of Figure~\ref{fig:options}, we  zig-zag between $d'_i$ and $d_i$, using straight and diagonal hops to alternate between the primed and
    unprimed rings. At each clause position $j$, a dashed diagonal hop from $d_{(i,3j-1)}$ allows a detour
    to collect $e_{(i,j)}$ and $e'_{(i,j)}$ before resuming. We collect the vertices of both rings with a single clockwise traversal and exit at $d_{(i,3m+2)}$.

    \item The mirror image of Option~3. Enter at $d'_{(i,3m+2)}$, 
     zig-zag counterclockwise, and exit at $d_{(i,0)}$.
\end{enumerate}

\begin{figure}[H]
    \centering
    \begin{tikzpicture}[scale=.9]
\pgfdeclarelayer{background}
\pgfsetlayers{background,main}

\def\gapsize{1.15}
\pgfmathsetmacro{\rstart}{3+\gapsize}

\node at (-0.6,.5) {\Large $\cdots$};
\node at ({\rstart+3+0.6},.5) {\Large $\cdots$};

\node[mynode,fill=dummycolor] (tL0) at (0,1) {};
\node[above,font=\small] at (0,1) {$d_{(i,3m-1)}$};
\node[mynode,fill=dummycolor] (tL1) at (1,1) {};
\node[mynode,fill=dummycolor] (tL2) at (2,1) {};
\node[mynode,fill=dummycolor] (tL3) at (3,1) {};
\node[above,font=\small] at (3,1) {$d_{(i,3m+2)}$};

\node[mynode,fill=dummycolor] (bL0) at (0,0) {};
\node[below,font=\small] at (0,0) {$d'_{(i,3m-1)}$};
\node[mynode,fill=dummycolor] (bL1) at (1,0) {};
\node[mynode,fill=dummycolor] (bL2) at (2,0) {};
\node[mynode,fill=dummycolor] (bL3) at (3,0) {};
\node[below,font=\small] at (3,0) {$d'_{(i,3m+2)}$};

\node[mynode,fill=dummycolor] (tR0) at ({\rstart},1) {};
\node[above,font=\small] at ({\rstart},1) {$d_{(i,0)}$};
\node[mynode,fill=dummycolor] (tR1) at ({\rstart+1},1) {};
\node[mynode,fill=dummycolor] (tR2) at ({\rstart+2},1) {};
\node[mynode,fill=dummycolor] (tR3) at ({\rstart+3},1) {};
\node[above,font=\small] at ({\rstart+3},1) {$d_{(i,3)}$};

\node[mynode,fill=dummycolor] (bR0) at ({\rstart},0) {};
\node[below,font=\small] at ({\rstart},0) {$d'_{(i,0)}$};
\node[mynode,fill=dummycolor] (bR1) at ({\rstart+1},0) {};
\node[mynode,fill=dummycolor] (bR2) at ({\rstart+2},0) {};
\node[mynode,fill=dummycolor] (bR3) at ({\rstart+3},0) {};
\node[below,font=\small] at ({\rstart+3},0) {$d'_{(i,3)}$};

\begin{pgfonlayer}{background}
\draw[-,lightgray] (tL0)--(tL1)--(tL2)--(tL3);
\draw[-,lightgray] (bL0)--(bL1)--(bL2)--(bL3);
\draw[-,lightgray] (tR0)--(tR1)--(tR2)--(tR3);
\draw[-,lightgray] (bR0)--(bR1)--(bR2)--(bR3);
\draw[-,lightgray] (tL0)--(bL0); \draw[-,lightgray] (tL1)--(bL1);
\draw[-,lightgray] (tL2)--(bL2); \draw[-,lightgray] (tL3)--(bL3);
\draw[-,lightgray] (tR0)--(bR0); \draw[-,lightgray] (tR1)--(bR1);
\draw[-,lightgray] (tR2)--(bR2); \draw[-,lightgray] (tR3)--(bR3);
\draw[-,lightgray] (tL0)--(bL1); \draw[-,lightgray] (tL1)--(bL2); \draw[-,lightgray] (tL2)--(bL3);
\draw[-,lightgray] (tR0)--(bR1); \draw[-,lightgray] (tR1)--(bR2); \draw[-,lightgray] (tR2)--(bR3);
\draw[-,lightgray] (bL0)--(tL1); \draw[-,lightgray] (bL1)--(tL2); \draw[-,lightgray] (bL2)--(tL3);
\draw[-,lightgray] (bR0)--(tR1); \draw[-,lightgray] (bR1)--(tR2); \draw[-,lightgray] (bR2)--(tR3);
\end{pgfonlayer}

\draw[-stealth,black,very thick] (bR0)--(bR1);
\draw[-stealth,black,very thick] (bR1)--(bR2);
\draw[-stealth,gray,very thick, dashed] (bR2)--(bR3);
\draw[-stealth,gray,very thick, dashed] (bL0)--(bL1);
\draw[-stealth,black,very thick] (bL1)--(bL2);
\draw[-stealth,black,very thick] (bL2)--(bL3);
\draw[-stealth,black,very thick] (bL3)--(tL3);
\draw[-stealth,black,very thick] (tL3)--(tL2);
\draw[-stealth,black,very thick] (tL2)--(tL1);
\draw[-stealth,gray,very thick, dashed] (tL1)--(tL0);
\draw[-stealth,gray,very thick, dashed] (tR3)--(tR2);
\draw[-stealth,black,very thick] (tR2)--(tR1);
\draw[-stealth,black,very thick] (tR1)--(tR0);
\draw[-stealth,black,very thick] (bR3) -- ({\rstart+3+0.4}, 0);
\draw[-stealth,black,very thick] (-0.4, 0) -- (bL0);
\draw[-stealth,black,very thick] ({\rstart+3+0.4}, 1) -- (tR3);
\draw[-stealth,black,very thick] (tL0) -- (-0.4, 1);
\end{tikzpicture}

    \vspace{1.5em}

\begin{tikzpicture}[scale=.9]
\pgfdeclarelayer{background}
\pgfsetlayers{background,main}

\def\gapsize{1.15}
\pgfmathsetmacro{\rstart}{3+\gapsize}

\node at (-0.6,.5) {\Large $\cdots$};
\node at ({\rstart+3+0.6},.5) {\Large $\cdots$};

\node[mynode,fill=dummycolor] (tL0) at (0,1) {};
\node[above,font=\small] at (0,1) {$d_{(i,3m-1)}$};
\node[mynode,fill=dummycolor] (tL1) at (1,1) {};
\node[mynode,fill=dummycolor] (tL2) at (2,1) {};
\node[mynode,fill=dummycolor] (tL3) at (3,1) {};
\node[above,font=\small] at (3,1) {$d_{(i,3m+2)}$};

\node[mynode,fill=dummycolor] (bL0) at (0,0) {};
\node[below,font=\small] at (0,0) {$d'_{(i,3m-1)}$};
\node[mynode,fill=dummycolor] (bL1) at (1,0) {};
\node[mynode,fill=dummycolor] (bL2) at (2,0) {};
\node[mynode,fill=dummycolor] (bL3) at (3,0) {};
\node[below,font=\small] at (3,0) {$d'_{(i,3m+2)}$};

\node[mynode,fill=dummycolor] (tR0) at ({\rstart},1) {};
\node[above,font=\small] at ({\rstart},1) {$d_{(i,0)}$};
\node[mynode,fill=dummycolor] (tR1) at ({\rstart+1},1) {};
\node[mynode,fill=dummycolor] (tR2) at ({\rstart+2},1) {};
\node[mynode,fill=dummycolor] (tR3) at ({\rstart+3},1) {};
\node[above,font=\small] at ({\rstart+3},1) {$d_{(i,3)}$};

\node[mynode,fill=dummycolor] (bR0) at ({\rstart},0) {};
\node[below,font=\small] at ({\rstart},0) {$d'_{(i,0)}$};
\node[mynode,fill=dummycolor] (bR1) at ({\rstart+1},0) {};
\node[mynode,fill=dummycolor] (bR2) at ({\rstart+2},0) {};
\node[mynode,fill=dummycolor] (bR3) at ({\rstart+3},0) {};
\node[below,font=\small] at ({\rstart+3},0) {$d'_{(i,3)}$};

\begin{pgfonlayer}{background}
\draw[-,lightgray] (tL0)--(tL1)--(tL2)--(tL3);
\draw[-,lightgray] (bL0)--(bL1)--(bL2)--(bL3);
\draw[-,lightgray] (tR0)--(tR1)--(tR2)--(tR3);
\draw[-,lightgray] (bR0)--(bR1)--(bR2)--(bR3);
\draw[-,lightgray] (tL0)--(bL0); \draw[-,lightgray] (tL1)--(bL1);
\draw[-,lightgray] (tL2)--(bL2); \draw[-,lightgray] (tL3)--(bL3);
\draw[-,lightgray] (tR0)--(bR0); \draw[-,lightgray] (tR1)--(bR1);
\draw[-,lightgray] (tR2)--(bR2); \draw[-,lightgray] (tR3)--(bR3);
\draw[-,lightgray] (tL0)--(bL1); \draw[-,lightgray] (tL1)--(bL2); \draw[-,lightgray] (tL2)--(bL3);
\draw[-,lightgray] (tR0)--(bR1); \draw[-,lightgray] (tR1)--(bR2); \draw[-,lightgray] (tR2)--(bR3);
\draw[-,lightgray] (bL0)--(tL1); \draw[-,lightgray] (bL1)--(tL2); \draw[-,lightgray] (bL2)--(tL3);
\draw[-,lightgray] (bR0)--(tR1); \draw[-,lightgray] (bR1)--(tR2); \draw[-,lightgray] (bR2)--(tR3);
\end{pgfonlayer}

\draw[-stealth,black,very thick] (bR0)--(tR0);
\draw[-stealth,black,very thick] (tR0)--(tR1);
\draw[-stealth,black,very thick] (tR1)--(bR1);
\draw[-stealth,black,very thick] (bR1)--(bR2);
\draw[-stealth,black,very thick] (bR2)--(tR2);
\draw[-stealth,gray,very thick, dashed] (tR2)--(bR3);  
\draw[-stealth,black,very thick] (bR3)--(tR3);
\draw[-stealth,black,very thick] (bL0)--(tL0);
\draw[-stealth,gray,very thick, dashed] (tL0)--(bL1);  
\draw[-stealth,black,very thick] (bL1)--(tL1);
\draw[-stealth,black,very thick] (tL1)--(tL2);
\draw[-stealth,black,very thick] (tL2)--(bL2);
\draw[-stealth,black,very thick] (bL2)--(bL3);
\draw[-stealth,black,very thick] (bL3)--(tL3);
\draw[-stealth,black,very thick] (tR3) -- ({\rstart+3+0.4}, 1);
\draw[-stealth,black,very thick] (-0.4, 0) -- (bL0);

\end{tikzpicture}
    \caption{Traversal options for dummy pair $i$. Top: Option~1, traversing $d'_i$ clockwise then $d_i$ counterclockwise (Option~2 is the mirror image). Bottom: Option~3, a clockwise zig-zag entering at $d'_{(i,0)}$ and exiting at $d_{(i,3m+2)}$ (Option~4 is the mirror image). Dashed arrows indicate where clause detours may be inserted. Thick arrows represent edges of $\pi$.}
\label{fig:options}
\end{figure}

Table~\ref{tab:cycle-options} lists how the entry and exit choices from the forward pass determine the  entry and exit in the return pass. 
\begin{table}[H]
\centering
\begin{tabular}{|l|l||l|l|}
\hline
$x_i$ exits & $x_{i+1}$ enters & Enter $d'_i$ & Exit $d_i$  \\
\hline
$3m+2$ & $3m+2$ & $0$    & $0$     \\

$0$    & $0$    & $3m+2$ & $3m+2$  \\
$0$    & $3m+2$ & $0$    & $3m+2$  \\
$3m+2$ & $0$    & $3m+2$ & $0$     \\
\hline
\end{tabular}
\caption{Option selection for dummy pair $i$, based on the forward-pass
transition $x_{(i,\text{exit})} \to x_{(i+1,\text{entry})}$.}
\label{tab:cycle-options}
\end{table}

In Options~1 and~2, dummy clause vertices $e_{(i,j)}$ are collected via detours during the traversal of $d_i$, and $e'_{(i,j)}$ via detours during the traversal of $d'_i$. In Option~3, the zig-zag is arranged so that at each clause position $j$, the path arrives at $d_{(i,3j-1)}$ on the unprimed ring. (In Option~4, the path analogously arrives at \ $d_{(i,3j)}$.) From there, the dashed diagonal hop is replaced by a clause detour executing the sequence $d_{(i,3j-1)} \to e_{(i,j)} \to e'_{(i,j)} \to d'_{(i,3j)} \to d_{(i,3j)},$ collecting both dummy clause vertices before resuming the zig-zag. When no clause detour
is needed at position $j$, the dashed diagonal hop $d_{(i,3j-1)} \to d'_{(i,3j)}$ is
taken directly. \end{proof}
\subsection{Non-Crossing Hamiltonian Cycle $\implies$ Satisfying Assignment}

We now prove the converse in the following theorem.

\begin{theorem}
   If $G_\phi^{\mathrm{cycle}}$ contains a non-crossing Hamiltonian cycle, then $\phi$ is satisfiable.
\end{theorem}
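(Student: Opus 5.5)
My plan is to reduce this statement to Theorem~\ref{NCHP_backward}. I will show that any non-crossing Hamiltonian cycle $\gamma$ in $G_\phi^{\mathrm{cycle}}$ splits into two parts: a forward $s$--$t$ part that covers exactly the real-ring vertices, the clause vertices $c_{(i,j)}$, and some of the dummy clause vertices; and a return $t$--$s$ part that covers all dummy ring vertices and the remaining dummy clause vertices. The forward part then behaves like a non-crossing Hamiltonian path in $G_\phi$, with the $e,e'$ vertices serving only as subdivisions of clause edges. From it I would read off a truth assignment exactly as in Theorem~\ref{NCHP_backward}.

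First I would record the directed structure. The only in-edges of $s$ come from $d_{(1,0)}$ and $d_{(1,3m+2)}$, and the only out-edges of $s$ go to the ends of ring $x_1$. The only in-edges of $t$ come from the ends of $x_n$, and the only out-edges of $t$ go to $d'_{n-1}$. Between real rings the transitions go inward, and between dummy pairs they go outward. Since the chain $c_{(i,j)}\to e_{(i,j)}\to e'_{(i,j)}\to c_{(i+1,j)}$ is directed, every $e$ or $e'$ vertex is entered either along the chain or from a dummy ring vertex. I would then prove a separation claim: no subpath of $\gamma$ of the form $uCv$ joins a real-ring vertex $u$ to a dummy-ring vertex $v$ through clause or dummy-clause vertices $C$. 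Equivalently, clause gadgets are never used to pass between the real and dummy systems, or between different real rings, or between different dummy pairs.

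The separation claim is proved by the endpoint-forcing argument of Theorem~\ref{NCHP_backward}, with minimality over a ring index. Among all "bad" clause-gadget subpaths that leave some ring (real or dummy, ordered geometrically from outside to inside), I would choose one leaving the outermost ring. Suppose it leaves $u=v_{(r,3j-1)}$ through $c_{(r,j)}$. Then I would examine $v_{(r,3j)}$. Its partner $c_{(r,j)}$ is interior to the bad path. Its ring neighbor $v_{(r,3j-1)}$ is unusable, because using it would make $v_{(r,3j-2)}$ a degree-$\le1$ vertex of $\gamma$. Its possible literal edge into $c_{(0,j)}$ leads down the chain and must exit at an outer ring, which contradicts minimality. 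So $v_{(r,3j)}$ would have only one $\gamma$-neighbor, which is impossible in a cycle. The same argument applies to a dummy ring leaving through $e_{(i,j)}$ or $e'_{(i,j)}$, using $d_{(i,3j)}$ and $d'_{(i,3j)}$. Here there is an added twist: dummy ring vertices also have straight and diagonal edges, so $d_{(i,3j)}$ can escape to $d'_{i}$.

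With the separation claim in hand, the real rings and dummy rings are linked only through $s$ and $t$. The cycle must therefore consist of an $s\to t$ segment through the real rings and a $t\to s$ segment through the dummies. The $s\to t$ segment enters each ring $x_i$ at an end, detours into clause gadgets only back to the same ring, and hence traverses each ring in one direction. This gives the assignment. Every $c_{(0,j)}$ has in-edges only from literal edges, so each clause contains a true literal.

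I expect the main obstacle to be the dummy-pair case. The straight and diagonal edges give the dummy vertices degree up to $5$ or $6$, so the simple counting "$v$ has no remaining neighbor" fails. Here I would use non-crossing rather than degree alone. A clause detour leaving $d_{(i,3j-1)}$ through $e_{(i,j)}$ toward another ring would sweep across the strip between $d_i$ and $d'_i$ at column $3j$. I would argue that any use of the straight or diagonal edges at $d_{(i,3j)}$ must then cross the detour segment $d_{(i,3j-1)}e_{(i,j)}$ or $e_{(i,j)}e'_{(i,j)}$, using the geometric placement of the clause column between columns $3j-1$ and $3j$. If that fails, the fallback is a parity or cut argument: each dummy pair is attached to the rest of the graph only through two directed transition edges plus clause connections, and I would count crossings of that cut by $\gamma$.
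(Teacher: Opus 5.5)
Your route is the paper's route: split $\gamma$ at $s$ and $t$, rule out cross-system clause detours by endpoint forcing, then invoke Theorem~\ref{NCHP_backward}. It has a genuine hole exactly where you suspect it, and that hole is the whole argument, not a side case.

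If ring $x_r$ with $r\ge 2$ enters $c_{(0,j)}$ by its literal edge, the chain $c_{(0,j)}\to c_{(1,j)}\to e_{(1,j)}$ may leave through $e_{(1,j)}\to d_{(1,3j-1)}$ or $d_{(1,3j)}$ without ever reaching $c_{(r,j)}$. So your outermost bad subpath can have its outer end on $d_1$, where the vertex you want to strand has six neighbours.

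Your crossing patch does not close this. The edge $d_{(i,3j-1)}e_{(i,j)}$ runs from column $3j-1$ to the clause column, which lies strictly between columns $3j-1$ and $3j$. It can therefore only cross edges spanning that gap, such as $d'_{(i,3j-1)}d_{(i,3j)}$ and $d'_{(i,3j-1)}d'_{(i,3j)}$. The straight edge $d_{(i,3j)}d'_{(i,3j)}$, the diagonal $d_{(i,3j)}d'_{(i,3j+1)}$ and the ring edge $d_{(i,3j)}d_{(i,3j+1)}$ are untouched. Meanwhile $e_{(i,j)}e'_{(i,j)}$ lies on the clause column inside ring $x_n$, far from the strip. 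So nothing is forced at $d_{(i,3j)}$, and the parity/cut fallback is not yet an argument. The paper's proof only asserts that the argument of Theorem~\ref{NCHP_backward} carries over, so it does not supply this step either.

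The missing idea is that you do not need separation at all; a local argument on the real rings suffices.

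\textbf{Triples.} Each $v_{(i,3k+1)}$ ($0\le k\le m$) has only its two ring neighbours. So $\gamma$ runs through each triple $T_{i,k}=v_{(i,3k)}v_{(i,3k+1)}v_{(i,3k+2)}$ consecutively, either upward (CW) or downward (CCW).

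\textbf{No CCW-to-CW switch.} Literal edges point out of ring vertices. Hence for $k<m$ the only in-neighbours of $v_{(i,3k+2)}$ and $v_{(i,3k+3)}$ outside their triples are each other and $c_{(i,k+1)}$. If $T_{i,k}$ were CCW and $T_{i,k+1}$ CW, both vertices would be first vertices of their triples, each leaving into its own triple. Both would then have to be entered from $c_{(i,k+1)}$, which has only one out-edge in $\gamma$. So each ring reads CW$\cdots$CW\,CCW$\cdots$CCW.

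\textbf{No CW-to-CCW switch.} Suppose both orientations occur in $x_i$. Then both ends of $x_i$ are entered by transition edges, since ring ends have no other outside neighbours. For $i=1$ this needs out-degree two at $s$. For $i\ge 2$ it makes both ends of $x_{i-1}$ exit, so $T_{i-1,0}$ is CCW and $T_{i-1,m}$ is CW, which was just excluded. Hence every ring has a single orientation; set $x_i$ true iff it is CW.

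\textbf{Clauses.} $c_{(0,j)}$ can be entered only by literal edges. An edge from $v_{(i,3j-1)}$ (so $x_i\in c_j$) forces $v_{(i,3j-2)}\to v_{(i,3j-1)}$, so $x_i$ is true. An edge from $v_{(i,3j)}$ forces $v_{(i,3j+1)}\to v_{(i,3j)}$, so $x_i$ is false.

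This proves the theorem without using the non-crossing property or any property of the dummy rings.
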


\begin{proof}
    Assume $\pi$ is a non-crossing Hamiltonian cycle in $G_\phi^{\mathrm{cycle}}$. Since the only edges incident to $s$ are the two forward edges $s \to v_{(1,\cdot)}$ and the two backward edges $d_{(1,\cdot)} \to s$, the cycle must use one edge of each type at $s$.  $\pi$ contains a subpath from $s$ to $t$, and the same argument as in Theorem~\ref{NCHP_backward} shows that any cross-ring clause detour would force a ring vertex to become an endpoint of $\pi$. So, the subpath from $s$ to $t$ is a non-crossing Hamiltonian path through the real rings, and therefore, by Theorem~\ref{NCHP_backward}, $\phi$ is satisfiable.
\end{proof}

Since $G_\phi^{\mathrm{cycle}}$ has $n(3m+3) + 2(n-1)(3m+3) + m(n+1) + 2m(n-1) + 2$ vertices, it can be constructed in polynomial time. NP-hardness follows from the reduction, and membership in NP is immediate since a Hamiltonian cycle can be verified in polynomial time. Thus, \textsc{NCHC} is NP-complete.

\section{Conclusion}

We provided reductions from 3-SAT to both NCHP and NCHC, establishing NP-completeness without relying on planarity of the underlying graph. The graph embeddings produced by our reductions are sparse, so a natural direction for future work is to determine the maximum number of edges that can be added while preserving NP-hardness. Progress on this question would shed light on the complexity of related problems, such as non-crossing Hamiltonian paths and cycles in embedded complete bipartite graphs.

\nocite{*}
\bibliographystyle{plain}
\bibliography{sample}

\end{document}